\documentclass[letterpaper,11pt]{article}
\usepackage{amsmath,mathrsfs,amssymb,amsfonts,amsthm,mathtools}
\usepackage{physics,color}
\usepackage{graphicx}
\usepackage{subfigure}
\usepackage{tabularx} 
\usepackage[table]{xcolor}
\usepackage{tikz,tikz-cd}
\usetikzlibrary{shapes.geometric, arrows, positioning}
\usepackage{empheq}
\usepackage{bm}

\newtheorem{lemma}{Lemma}
\newtheorem{remark}{Remark}
\newcommand*\widefbox[1]{\fbox{\hspace{2em}#1\hspace{2em}}}
\pdfoutput=1 
\usepackage{jheppub} 
\usepackage[T1]{fontenc} 


\newcommand{\nn}{\nonumber}

\hypersetup{
    colorlinks=true,
    linkcolor=blue,
    citecolor=blue,
    urlcolor=blue
}


\title{Unitary reformulation of the thermofield double state and limits of cyclic multi-mode squeezing}


\author[a]{Arash Azizi}

\affiliation[a]{{\it The Institute for Quantum Science and Engineering,
Texas A\&M University,\\ College Station, TX 77843, U.S.A.}}

\emailAdd{sazizi@tamu.edu}

\abstract{
We investigate the structure and uniqueness of squeezed vacuum states defined by annihilation conditions of the form $(a - \alpha a^\dagger)|\psi\rangle = 0$ and their multimode generalizations, with applications to the Thermofield Double (TFD) state in quantum field theory. For $N=1$ and $N=2$, we demonstrate that these conditions uniquely define the single- and two-mode squeezed vacua, generated by unitary squeezing operators. A key result is the unitary reformulation of the TFD state, expressed as a product of two-mode squeezing operators, ensuring invertibility and resolving the non-unitary paradox in the Minkowski--Rindler vacuum correspondence. Extending to cyclic annihilation conditions $(a_i - \alpha_i a_{i+1}^\dagger)|\psi\rangle = 0$ with $a_{N+1} \equiv a_1$, we find that non-trivial squeezed states exist only for $N=2$. For $N > 2$, we establish a no-go theorem, proving no normalizable, non-trivial solutions exist, revealing a fundamental limit on cyclic multi-mode entanglement. These results highlight the bipartite nature of TFD-like entanglement and constrain multipartite generalizations in multi-region quantum field theories.
}

\begin{document} 
\maketitle
\flushbottom

\section{Introduction}

The Thermofield Double (TFD) state is a cornerstone in modern theoretical physics, bridging quantum field theory in curved spacetime, black hole thermodynamics, and quantum information theory through its elegant representation of thermal states as pure entangled states in a doubled Hilbert space \cite{Unruh1976, Israel76}. A prominent example is the Minkowski vacuum state expressed in terms of modes localized in two distinct, causally disconnected regions—often represented as the left and right Rindler wedges. In this two-mode representation, the Minkowski vacuum state $\ket{0_{M}}$ is annihilated by the following coupled mode conditions:
\begin{equation}
\bigl(b_{R\,\omega}- e^{-\frac{\pi \omega}{a}}\,b_{L\,\omega}^\dagger\bigr)\ket{0_M}=0\,, \quad
\bigl(b_{L\,\omega}- e^{-\frac{\pi \omega}{a}}\,b_{R\,\omega}^\dagger\bigr)\ket{0_M}=0\,,
\label{brblMinkvac}
\end{equation}
where $b_{R\,\omega}^{\dagger}$ and $b_{L\,\omega}^{\dagger}$ are creation operators associated with field modes localized respectively in the right and left Rindler wedges. Physically, the right and left wedges correspond to distinct, causally disconnected regions accessible to uniformly accelerated observers. Such accelerated observers perceive the Minkowski vacuum $\ket{0_M}$ as a thermally populated state, characterized by the Unruh temperature $T_U=\frac{a}{2\pi}$, with $a$ denoting the observers' proper acceleration. Thus, the frequency-dependent exponential factor $e^{-\frac{\pi \omega}{a}}$ in Eq.~(\ref{brblMinkvac}) encodes this thermal distribution, illustrating the inevitable entanglement between modes localized in separate wedges.

The formal solution of these annihilation conditions explicitly demonstrates the structure of quantum entanglement induced by relativistic acceleration:
\begin{equation}
\ket{0_{M}}=\frac{1}{\sqrt{Z}}\,
\exp\left[\int_0^{\infty} d\omega\, e^{-\frac{\pi \omega}{a}}\,b_{L \omega}^{\dagger}b_{R \omega}^{\dagger}\right] \ket{0_{L}}\otimes\ket{0_{R}}\,,
\label{MinkRind}
\end{equation}
where $\ket{0_L}$ and $\ket{0_R}$ represent vacuum states defined by observers restricted entirely within their respective Rindler wedges. This representation clearly highlights that the Minkowski vacuum is a highly nontrivial entangled state when expressed in the accelerated (Rindler) frame, analogous to the structure of the Thermofield Double (TFD) states frequently encountered in quantum gravity and holography. 

While this expression captures thermal and entanglement structures elegantly, it is generated by an operator that is inherently non-unitary. Although any two states in a Hilbert space are trivially related by some unitary operator, the conceptual challenge here is more specific. The task is not to argue for the mere existence of such a transformation, but rather to provide an explicit \textbf{construction} that reconciles the widely-used, physically-motivated non-unitary generator in Eq.~(\ref{MinkRind}) with a physically realizable and invertible transformation at the level of individual modes. Clarifying this is fundamental for conceptual completeness and for ensuring the reversibility of the quantum transformations involved.

To explore this question systematically, we draw on the powerful mathematical and conceptual tools of quantum optics. In particular, the squeezing operator introduced by Caves \cite{caves1981quantum}, which generates single- and two-mode squeezed states of light, plays an essential role in our analysis. Squeezed states are characterized by reduced quantum noise in one quadrature at the expense of its conjugate and have become vital resources for precision measurements, such as gravitational wave detection in LIGO \cite{LIGO2016, aasi2013enhanced, Oelker:14}, and quantum information applications including quantum key distribution, computing, and teleportation \cite{braunstein2005quantum, Menicucci2006, ohliger2010limitations, vanloock1999, weedbrook2012gaussian}. Defined as vacua for appropriately transformed annihilation operators, squeezed states yield unique single-mode squeezed vacua ($N=1$) and two-mode squeezed vacua ($N=2$), the latter being primary sources of continuous-variable entanglement \cite{walls1983squeezed, loudon1987squeezed, mandel1995optical}.

A central result of this work is the successful unitary reformulation of the TFD state. We demonstrate that the Minkowski vacuum can be generated from the Rindler vacuum via a product of unitary two-mode squeezing operators, as given in Eq.~\eqref{eq:TFD_Unitary}. This ensures invertibility and resolves the apparent non-unitary paradox by establishing a clear algebraic framework for the Minkowski–Rindler correspondence. 

This insight motivates a natural question: can this pairwise entanglement structure be extended to a multipartite scenario involving $N$ distinct regions? Such a generalization is motivated by analogous looped or cyclic entanglement structures that are fundamental in both condensed matter and high-energy theory. In condensed matter physics, for instance, ring-shaped quantum spin chains and periodic tensor network states are widely studied as models for strongly correlated systems~\cite{Affleck1987, Verstraete2008, Cirac2009, Evenbly2009}. In the context of holography and AdS/CMT~\cite{Sachdev2011}, multi-boundary wormhole geometries are conjectured to be dual to multipartite entangled states, a concept central to the ER=EPR correspondence and its extensions~\cite{Maldacena2013ER=EPR, Hartman2013, Balasubramanian2014, Bhattacharya2020}. Motivated by these physical and geometric analogies, we test the most direct mathematical generalization of the bipartite annihilation conditions by proposing a cyclically coupled $N$-mode condition of the form:
\begin{equation}
(a_i - \alpha_i a_{i+1}^\dagger)\ket{\psi}=0\,,\quad\text{with cyclic identification}\quad a_{N+1}\equiv a_1\,.
\end{equation}
Remarkably, we uncover a fundamental structural limitation. While nontrivial squeezed states arise naturally for $N=1$ and $N=2$, we prove that no nontrivial, normalizable solution exists for any number of modes $N>2$. The straightforward cyclic generalization of the two-mode entangled vacuum fails, establishing a sharp no-go theorem for this class of multipartite squeezed vacua. This result underscores the privileged status of bipartite entanglement structures like the Minkowski-Rindler vacuum and cautions against naive extensions to more general multipartite frameworks.

We summarize our key findings in Table~\ref{tab:squeezing_summary}. The paper is organized as follows: Section~\ref{sec:one_mode} rigorously demonstrates the uniqueness and unitary generation of single-mode squeezed states; Section~\ref{sec:two_modes} extends this analysis to two-mode squeezed vacua; finally, Section~\ref{sec:n_modes} provides a detailed derivation and discussion of the no-go theorem for cyclically coupled multipartite squeezed states.

\begin{table}
\centering
\caption{Summary of uniqueness and existence of squeezed states $\ket{\psi}$ for different numbers of modes under specified annihilation constraints.}
\label{tab:squeezing_summary}
\renewcommand{\arraystretch}{1.6}
\begin{tabularx}{\textwidth}{|c|X|X|}
\hline
\textbf{Modes} & \textbf{Annihilation Condition(s)} & \textbf{Form of $\ket{\psi}$ and Existence} \\
\hline

\textbf{1}
& \begin{center}
    $(a - \alpha a^\dagger)\ket{\psi} = 0$
\end{center}
& Unique solution:
\[
\ket{\psi} = (1 - |\alpha|^2)^{1/4} \sum_{n=0}^{\infty} \alpha^n \frac{(2n-1)!!}{(2n)!!} \ket{2n}
\]
Valid for $|\alpha| < 1$; corresponds to standard single-$N$-mode squeezed vacuum, generated by a unitary operator. \\

\hline

\textbf{2}
& \begin{center}
    $\begin{cases}
(a - \alpha b^\dagger)\ket{\psi} = 0 \\
(b - \beta a^\dagger)\ket{\psi} = 0
\end{cases}$
\end{center}
& Unique solution \newline($\alpha$ should be equal to $\beta$):
\[
\ket{\psi} = \sqrt{1 - |\alpha|^2} \sum_{m=0}^{\infty} \alpha^m \ket{m,m}
\]
Valid for $|\alpha| < 1$; matches standard two-$N$-mode squeezed vacuum, generated by a unitary operator. \\

\hline

\textbf{$N > 2$}
& \begin{center}
    $(a_i - \alpha_i a_{i+1}^\dagger)\ket{\psi} = 0$ \newline for $i=1,\dots,N$ (cyclic: $a_{N+1} \equiv a_1$)
  \end{center}
&  \begin{center}
    \textbf{No solution exists unless all $\alpha_i = 0$}. (i.e., reduces to vacuum state)
   \end{center} \\
\hline
\end{tabularx}
\end{table}

\section{Single-Mode Squeezing and Uniqueness}\label{sec:one_mode}

\subsection{Uniqueness of the Squeezed Vacuum State}

We investigate whether the unitary squeezing operator
\begin{align}
S(\xi) = e^{\frac{1}{2} ( \xi^* a^2 - \xi a^{\dagger 2})}
\end{align}
produces the same state as the non-unitary exponential $e^{\frac{1}{2} \alpha a^{\dagger 2}}$ when acting on the vacuum, i.e., whether:
\begin{align}
S(\xi) \ket{0} \stackrel{?}{=} e^{\frac{1}{2} \alpha a^{\dagger 2}} \ket{0}, \label{eq:relation}
\end{align}
where $\xi = r e^{i\theta}$. This addresses whether a state defined by annihilation conditions, often leading to a non-unitary form in the Fock basis, can be generated by a unitary transformation, ensuring physical implementability and interconvertibility.

We establish the uniqueness of the state $\ket{\psi_\alpha}$ defined by:
\begin{equation}
(a - \alpha a^\dagger) \ket{\psi_\alpha} = 0.
\label{eq:squeezing_condition}
\end{equation}
Representing $\ket{\psi_\alpha} = \sum_{n=0}^{\infty} C_n \ket{n}$ in the Fock basis and substituting into Eq.~\eqref{eq:squeezing_condition}, we obtain:
\begin{align}
(a - \alpha a^\dagger) \sum_{n=0}^{\infty} C_n \ket{n} = \sum_{n=0}^{\infty} C_n \sqrt{n} \ket{n-1} - \alpha \sum_{n=0}^{\infty} C_n \sqrt{n+1} \ket{n+1} = 0.
\end{align}
Equating coefficients of $\ket{n}$, we derive the recurrence relation for $n \ge 1$:
\begin{equation}
C_{n+1} \sqrt{n+1} = \alpha C_{n-1} \sqrt{n}.
\label{eq:recurrence}
\end{equation}
For $n=0$, the $\ket{0}$ coefficient gives $C_1 = 0$, implying $C_{2n+1} = 0$ for all $n \ge 0$. Iterating Eq.~\eqref{eq:recurrence} for even indices, we find:
\begin{equation}
C_{2n} = \alpha^n \sqrt{\frac{(2n-1)!!}{(2n)!!}} C_0.
\label{eq:C2n_formula}
\end{equation}
To normalize $\ket{\psi_\alpha}$, we compute:
\begin{equation}
\bra{\psi_\alpha} \ket{\psi_\alpha} = |C_0|^2 \sum_{n=0}^{\infty} |\alpha|^{2n} \frac{(2n-1)!!}{(2n)!!} = |C_0|^2 (1 - |\alpha|^2)^{-1/2} = 1,
\end{equation}
using the generating function $(1-x)^{-1/2} = \sum_{n=0}^{\infty} \frac{(2n-1)!!}{(2n)!!} x^n$ (see Appendix~\ref{app:lemma}). Thus, $|C_0|^2 = (1 - |\alpha|^2)^{1/2}$, and choosing $C_0 = (1 - |\alpha|^2)^{1/4}$ (real and positive), the unique state is:
\begin{empheq}[box=\widefbox]{equation}
(a - \alpha a^\dagger) \ket{\psi_\alpha} = 0
\quad \text{implies} \quad
\ket{\psi_\alpha} = (1 - |\alpha|^2)^{1/4} \sum_{n=0}^{\infty} \alpha^n \frac{(2n-1)!!}{(2n)!!} \ket{2n}.
\label{eq:psi_alpha_unique}
\end{empheq}
\vspace{.5 cm}
\subsubsection*{Condition for $|\alpha| < 1$}
From Eq.~\eqref{eq:squeezing_condition}, $a \ket{\psi_\alpha} = \alpha a^\dagger \ket{\psi_\alpha}$. Taking norms and using $[a, a^\dagger] = 1$, we get:
\begin{equation}
\langle \psi_\alpha | a^\dagger a | \psi_\alpha \rangle = |\alpha|^2 (\langle \psi_\alpha | a^\dagger a | \psi_\alpha \rangle + 1).
\end{equation}
Letting $\bar{n} = \langle \psi_\alpha | a^\dagger a | \psi_\alpha \rangle$, this simplifies to $\bar{n} = |\alpha|^2 (\bar{n} + 1)$, yielding:
\begin{equation}
\bar{n} = \frac{|\alpha|^2}{1 - |\alpha|^2}.
\end{equation}
Since $\bar{n} \ge 0$, we require $1 - |\alpha|^2 > 0$, implying $|\alpha| < 1$ for a normalizable state.

\subsection{Squeezing Operator}

The uniqueness of the state defined by $(a - \alpha a^\dagger) \ket{\psi_\alpha} = 0$ implies that the standard squeezed vacuum state, generated by the unitary squeezing operator $S(\xi) = e^{\frac{1}{2} ( \xi^* a^2 - \xi a^{\dagger 2})}$ acting on $\ket{0}$, takes the form derived in Eq.~\eqref{eq:psi_alpha_unique}. The squeezed vacuum $S(\xi)\ket{0}$ satisfies (see, e.g., \cite{azizi.sq.coh}):
\begin{equation}
(a + \tanh r \, e^{i\theta} a^\dagger) S(\xi) \ket{0} = 0,
\end{equation}
where $\xi = r e^{i\theta}$. Comparing with Eq.~\eqref{eq:squeezing_condition}, we identify:
\begin{equation}
\boxed{\quad \alpha = -\tanh r \, e^{i\theta}. \quad}
\end{equation}
The normalization factor is:
\begin{equation}
1 - |\alpha|^2 = \frac{1}{\cosh^2 r}.
\end{equation}
Thus, the squeezed vacuum state is:
\begin{empheq}[box=\widefbox]{equation}
S(\xi) \ket{0} = \frac{1}{\sqrt{\cosh r}} \sum_{n=0}^{\infty} (\tanh r)^n \, e^{in(\theta+\pi)} \sqrt{\frac{(2n-1)!!}{(2n)!!}} \ket{2n}.
\end{empheq}

To confirm, the non-unitary state $e^{\frac{1}{2} \alpha a^{\dagger 2}} \ket{0}$ satisfies Eq.~\eqref{eq:squeezing_condition}, as $[a, e^{\frac{1}{2} \alpha a^{\dagger 2}}] = \alpha a^\dagger e^{\frac{1}{2} \alpha a^{\dagger 2}}$ (see Appendix~\ref{app:commutator}) implies $(a - \alpha a^\dagger) e^{\frac{1}{2} \alpha a^{\dagger 2}} \ket{0} = 0$. Given the uniqueness of the solution (Eq.~\eqref{eq:psi_alpha_unique}), the states are equivalent, confirming:
\begin{align}
\boxed{\quad S(\xi) \ket{0} = (1 - |\alpha|^2)^{1/4}
e^{\frac{1}{2} \alpha a^{\dagger 2}} \ket{0}, \quad \text{with} \quad \alpha = -\tanh r \, e^{i\theta}. \quad}
\label{eq:relation2}
\end{align}
This demonstrates that the non-unitary exponential form is equivalent to a state generated by a unitary operator for the single-mode case.

\section{Two-Mode Squeezing: Coupled Annihilation Conditions}\label{sec:two_modes}

\subsection{Uniqueness of the Two-Mode Squeezed Vacuum State}

Consider two modes satisfying $a \ket{0}_a = 0$, $b \ket{0}_b = 0$, with shorthand notation $\ket{m,n} \equiv \ket{m_a} \otimes \ket{n_b}$. We seek the unique state $\ket{\psi}$ defined by:
\begin{align}
(a - \alpha b^\dagger) \ket{\psi} &= 0, \label{eq:coupled_a} \\
(b - \beta a^\dagger) \ket{\psi} &= 0, \label{eq:coupled_b}
\end{align}
and determine if it can be generated by a unitary operator, extending the single-mode analysis.

Expanding $\ket{\psi} = \sum_{m,n=0}^{\infty} C_{mn} \ket{m,n}$ in the Fock basis and substituting into Eq.~\eqref{eq:coupled_a}, we obtain the recurrence relation:
\begin{equation}
C_{m+1,n} = \alpha \sqrt{\frac{n}{m+1}} C_{m,n-1}.
\label{eq:recurrence_a_two_mode}
\end{equation}
Similarly, Eq.~\eqref{eq:coupled_b} yields:
\begin{equation}
C_{m,n+1} = \beta \sqrt{\frac{m}{n+1}} C_{m-1,n}.
\label{eq:recurrence_b_two_mode}
\end{equation}
Applying these relations to coefficients $C_{m,0}$ ($m \ge 1$) and $C_{0,n}$ ($n \ge 1$), we find $C_{m,0} = 0$ and $C_{0,n} = 0$. Off-diagonal coefficients $C_{m,n}$ ($m \neq n$) vanish (see Appendix~\ref{app:off_diagonal}), implying only $C_{k,k}$ are non-zero. Consistency requires $\alpha = \beta$, simplifying the recurrence to:
\begin{equation}
C_{m+1,m+1} = \alpha C_{m,m} \quad \Rightarrow \quad C_{m,m} = \alpha^m C_{0,0}.
\end{equation}
Thus, $\ket{\psi} = C_{0,0} \sum_{m=0}^{\infty} \alpha^m \ket{m,m}$. Normalizing, $\bra{\psi} \ket{\psi} = |C_{0,0}|^2 \frac{1}{1 - |\alpha|^2} = 1$, gives $C_{0,0} = \sqrt{1 - |\alpha|^2}$. The unique state is:
\begin{empheq}[box=\widefbox]{align}
(a - \alpha b^\dagger) \ket{\psi} = 0 \quad \text{and} \quad
(b - \beta a^\dagger) \ket{\psi} = 0 \quad \text{implies} \nn\\
\alpha = \beta, \quad \text{and} \quad
\ket{\psi} = \sqrt{1 - |\alpha|^2} \sum_{m=0}^{\infty} \alpha^m \ket{m,m}. \label{eq:psi_two_mode_summary}
\end{empheq}

\subsection{Transformation of Operators Using the Two-Mode Squeezing Operator}

We consider the unitary two-mode squeezing operator:
\begin{equation}
T(\xi) = e^{\xi a^\dagger b^\dagger - \xi^* ab},
\label{eq:two_mode_squeezing_operator}
\end{equation}
where \(\xi = r e^{i\theta}\). Using the Baker-Campbell-Hausdorff formula, \( e^X Y e^{-X} = Y + [X,Y] + \frac{1}{2!}[X,[X,Y]] + \cdots \), with \( X = \xi a^\dagger b^\dagger - \xi^* ab \) and \( Y = a \), we compute \( T a T^\dagger \). Key commutators are:
\begin{align}
[X, a] &= -\xi b^\dagger, \quad [X, -\xi b^\dagger] = |\xi|^2 a = r^2 a.
\end{align}
Summing the series:
\begin{equation}
T a T^\dagger = a \cosh r - \xi b^\dagger \frac{\sinh r}{r} = \cosh r \, a - e^{i\theta} \sinh r \, b^\dagger.
\label{eq:TaTdagger}
\end{equation}
Similarly:
\begin{equation}
T b T^\dagger = \cosh r \, b - e^{i\theta} \sinh r \, a^\dagger.
\label{eq:TbTdagger}
\end{equation}
Since \( T a \ket{0,0} = (\cosh r \, a - e^{i\theta} \sinh r \, b^\dagger) T \ket{0,0} = 0 \) (as \( a \ket{0,0} = 0 \)), we obtain:
\begin{equation}
(a - e^{i\theta} \tanh r \, b^\dagger) T \ket{0,0} = 0.
\label{eq:tmsv_diff_eq_a}
\end{equation}
Similarly:
\begin{equation}
(b - e^{i\theta} \tanh r \, a^\dagger) T \ket{0,0} = 0.
\label{eq:tmsv_diff_eq_b}
\end{equation}
Comparing with Eqs.~\eqref{eq:coupled_a} and \eqref{eq:coupled_b}, we identify \( \alpha = \beta = \tanh r \, e^{i\theta} \). The unique state from Eq.~\eqref{eq:psi_two_mode_summary} is:
\begin{equation}
\ket{\psi} = \sqrt{1 - |\alpha|^2} \sum_{n=0}^{\infty} \alpha^n \ket{n,n} = \frac{1}{\cosh r} \sum_{n=0}^{\infty} (\tanh r)^n e^{in\theta} \ket{n,n},
\end{equation}
matching \( \ket{\psi} = \sqrt{1 - |\alpha|^2} e^{\alpha a^\dagger b^\dagger} \ket{0,0} \). Verifying, since \( [a, e^{\alpha a^\dagger b^\dagger}] = \alpha b^\dagger e^{\alpha a^\dagger b^\dagger} \), we have:
\begin{equation}
(a - \alpha b^\dagger) e^{\alpha a^\dagger b^\dagger} \ket{0} = e^{\alpha a^\dagger b^\dagger} a \ket{0} = 0.
\end{equation}
Similarly, \( (b - \alpha a^\dagger) e^{\alpha a^\dagger b^\dagger} \ket{0} = 0 \). By uniqueness, we conclude:
\begin{align}
\label{eq:relation_two_mode}
\boxed{\quad e^{\xi a^\dagger b^\dagger - \xi^* ab} \ket{0} = \sqrt{1 - |\alpha|^2} \, e^{\alpha a^\dagger b^\dagger} \ket{0}, \quad \text{with} \quad \xi = r e^{i\theta} \quad \text{and} \quad \alpha = \tanh r \, e^{i\theta}, \quad}
\end{align}
The state \( T(\xi)\ket{0,0} \) is normalized due to unitarity, confirming that the two-mode squeezed vacuum is generated by a unitary operator, ensuring physical realizability.
\begin{remark}
In this section, the two-mode squeezed state is defined as \( T(\xi) \ket{0} = e^{\xi a^\dagger b^\dagger - \xi^* ab} \ket{0} \), whereas in Section~\ref{sec:one_mode}, the single-mode squeezed state uses the convention \( S(\xi) \ket{0} = e^{\frac{1}{2} (\xi^* a^2 - \xi a^{\dagger 2})} \ket{0} \) (cf.~\cite{scully1997quantum}). The differing exponent structures result in a minus sign discrepancy in the parameter \(\alpha\): \(\alpha = \tanh r e^{i\theta}\) here versus \(\alpha = -\tanh r e^{i\theta}\) in Section~\ref{sec:one_mode}.
\end{remark}
\subsection{Relation to the Thermofield Double State and Invertibility}

The single- and two-mode squeezed vacua derived above clarify a key issue in relativistic quantum field theory concerning the Thermofield Double state, often expressed as:
\begin{align}
\ket{0_{M}} = \frac{1}{\sqrt{Z}} \exp\left( \int_0^\infty d\omega\, e^{-\beta \omega / 2}\, b_{L\omega}^\dagger b_{R\omega}^\dagger \right) \ket{0_L} \otimes \ket{0_R}, \label{eq:TFD}
\end{align}
where \( b_{L\omega}^\dagger \), \( b_{R\omega}^\dagger \) are creation operators for left and right Rindler wedges, and \( \beta = 2\pi/a \) is the inverse Unruh temperature. This non-unitary form suggests a non-invertible transformation between Rindler and Minkowski vacua, complicating the entanglement interpretation.

Based on Eq.~\ref{eq:relation_two_mode}, we have:
\begin{align}
\alpha = e^{-\beta \omega / 2} = \tanh r \quad \Rightarrow \quad \xi = r = \frac{1}{2} \ln \coth \left( \frac{\beta \omega}{4} \right),
\end{align}
hence, the TFD state can be written formally as a unitary transformation:
\begin{align}
\boxed{\quad \ket{0_{M}} = \exp\left\{ \frac{1}{2} \int_0^\infty d\omega\, \ln \coth \left( \frac{\beta \omega}{4} \right) \left( b_{L\omega}^\dagger b_{R\omega}^\dagger - b_{L\omega} b_{R\omega} \right) \right\} \ket{0_L} \otimes \ket{0_R}, \quad} \label{eq:TFD_Unitary}
\end{align}
The unitary nature of the two-mode squeezing operator \( T(\xi) = e^{\xi a^\dagger b^\dagger - \xi^* ab} \) ensures that the transformation from the two-mode vacuum to the squeezed state is invertible, a critical property for physical realizability. Unlike the non-unitary operator \( e^{\alpha a^\dagger b^\dagger} \), which lacks an inverse, \( T(\xi) \) satisfies \( T^\dagger T = I \), allowing the squeezed state \( T(\xi) \ket{0,0} \) to be reversibly mapped back to \( \ket{0,0} \). This invertibility, as shown in Eq.~\eqref{eq:relation_two_mode}, resolves the apparent non-unitary paradox in the Thermofield Double state’s formal expression (Eq.~\eqref{eq:TFD}) by demonstrating that each mode’s transformation is unitary, providing a clear algebraic framework for the Minkowski–Rindler correspondence and enabling practical implementation in quantum systems.

Our results in Sections \ref{sec:one_mode} and \ref{sec:two_modes} resolve this for each frequency mode. For single-mode squeezing, the state defined by \( (a - \alpha a^\dagger)\ket{\psi} = 0 \) equals \( S(\xi)\ket{0} \), where \( S(\xi) \) is unitary. Similarly, for two-mode squeezing, \( (a - \alpha b^\dagger)\ket{\psi} = 0 \), \( (b - \alpha a^\dagger)\ket{\psi} = 0 \) yields a state generated by the unitary \( T(\xi) \). Thus, the TFD state’s non-unitary form is a product of unitary transformations per mode, ensuring invertibility and clarifying the Minkowski–Rindler correspondence.

This insight motivates exploring multimode squeezing for entanglement across multiple spacetime regions, investigated in Section~\ref{sec:n_modes}, where we find a no-go result for cyclic \( N \)-mode squeezed vacua when \( N > 2 \).

\section{General \texorpdfstring{\boldmath $N$}{N}-Mode Squeezing: Recurrence Structure and No-Go Result under Cyclic Coupling}\label{sec:n_modes}  

\subsection{Uniqueness of \texorpdfstring{\boldmath $N$}{N}-Mode Squeezed State}

We generalize single- and two-mode squeezed states to the \( N \)-mode squeezed state (NMSS), defined by cyclic coupling conditions:
\begin{equation}
(a_i - \alpha_i a_{i+1}^\dagger) |\psi \rangle = 0, \quad i = 1, 2, \dots, N,
\end{equation}
where \( a_{N+1} \equiv a_1 \). This structure extends pairwise interactions, inspired by multi-region quantum field theories, such as generalizations of the Minkowski-Rindler framework.

Express the state as a Fock state superposition:
\begin{equation}
|\psi \rangle = \sum_{m_1, \dots, m_N} C_{m_1, \dots, m_N} | m_1, \dots, m_N \rangle,
\end{equation}
where \( | m_1, \dots, m_N \rangle = | m_1 \rangle \otimes \dots \otimes | m_N \rangle \). Applying \( (a_i - \alpha_i a_{i+1}^\dagger) |\psi \rangle = 0 \), we obtain:
\begin{equation}
\sum_{m_1, \dots, m_N} C_{m_1, \dots, m_N} \left( \sqrt{m_i} | m_1, \dots, m_i - 1, \dots, m_N \rangle - \alpha_i \sqrt{m_{i+1} + 1} | m_1, \dots, m_{i+1} + 1, \dots, m_N \rangle \right) = 0.
\end{equation}

\subsection{Recurrence Relations for Coefficients}

Equating coefficients of each Fock state \( | m_1, \dots, m_i, \dots, m_{i+1}, \dots, m_N \rangle \), we derive the recurrence relation:
\begin{equation}
C_{m_1, \dots, m_i+1, \dots, m_N} = \alpha_i \sqrt{\frac{m_{i+1}}{m_i+1}} C_{m_1, \dots, m_i, \dots, m_{i+1}-1, \dots, m_N},
\end{equation}
or equivalently:
\begin{equation}
C_{m_1, \dots, m_N} = \alpha_i \sqrt{\frac{m_{i+1}}{m_i}} C_{m_1, \dots, m_i-1, m_{i+1}-1, \dots, m_N}.
\end{equation}
For \( m_{i+1} \ge m_i \), iterating \( m_i \) times reduces \( m_i \) to zero:
\begin{equation}
C_{m_1, \dots, m_N} = \alpha_i^{m_i} \sqrt{\binom{m_{i+1}}{m_i}} C_{m_1, \dots, 0, m_{i+1}-m_i, \dots, m_N}.
\end{equation}
For \( m_i \ge m_{i+1} \), iterating \( m_{i+1} \) times reduces \( m_{i+1} \) to zero:
\begin{equation}
C_{m_1, \dots, m_N} = \alpha_i^{-m_{i+1}} \sqrt{\binom{m_i+m_{i+1}-1}{m_{i+1}}} C_{m_1, \dots, m_i+m_{i+1}, 0, \dots, m_N}.
\end{equation}
Iterating across \( i = 1, 2, \dots, N-1 \), coefficients are expressed in terms of those with earlier modes set to zero, subject to cyclic constraints. These relations underpin the no-go result for \( N > 2 \), derived below.
\subsection{Structural Properties of the \texorpdfstring{\boldmath $N$}{N}-Mode Squeezed State}

The master equation \( a_i |\psi \rangle = \alpha_i a_{i+1}^\dagger |\psi \rangle \) implies that \( |\psi \rangle \) is a superposition of Fock states with even total particle number, characteristic of a squeezed vacuum. For non-zero coefficients \( C_{m_1, \dots, m_N} \), the recurrence relations require \( m_i \le m_{i+1} \) for \( i = 1, \dots, N-1 \). To see this, consider a term with \( m_i > 0 \), \( m_{i+1} = 0 \):
\begin{align}
a_i | \dots, m_i, 0_{i+1}, \dots \rangle &= \sqrt{m_i} | \dots, m_i-1, 0_{i+1}, \dots \rangle, \\
\alpha_i a_{i+1}^\dagger | \dots, m_i, 0_{i+1}, \dots \rangle &= \alpha_i | \dots, m_i, 1_{i+1}, \dots \rangle.
\end{align}
Since these states are orthogonal for \( m_i > 0 \), the master equation holds only if their coefficients vanish, implying \( m_i = 0 \) when \( m_{i+1} = 0 \), or generally \( m_i \le m_{i+1} \).

The cyclic condition \( a_{N+1} \equiv a_1 \) adds \( m_N \le m_1 \), yielding:
\begin{equation}
m_1 \le m_2 \le \dots \le m_N \le m_1.
\end{equation}
This requires equal excitations:
\begin{equation}
m_1 = m_2 = \dots = m_N = k.
\end{equation}
Thus, \( |\psi \rangle = \sum_k C_k |k, k, \dots, k \rangle \). For \( N > 2 \), substituting into \( (a_i - \alpha_i a_{i+1}^\dagger) |\psi \rangle = 0 \):
\begin{equation}
\sum_k C_k \left( \sqrt{k} | \dots, k-1_i, \dots, k, \dots \rangle - \alpha_i \sqrt{k+1} | \dots, k_i, \dots, k+1_{i+1}, \dots \rangle \right) = 0,
\end{equation}
the orthogonal states require \( C_k \sqrt{k} = 0 \), \( C_k \alpha_i \sqrt{k+1} = 0 \), forcing \( C_k = 0 \) for \( k > 0 \). Thus, \( |\psi \rangle \) reduces to the vacuum \( |0, \dots, 0 \rangle \). For \( N = 2 \), the states align, allowing non-zero \( C_{k,k} \), as in the two-mode squeezed vacuum.

This no-go result for \( N > 2 \) highlights a fundamental limit in extending pairwise unitary transformations, like Minkowski-Rindler correspondences, to cyclic \( N \)-mode systems using these annihilation conditions.

\subsection{Distinction from Decoupled Pairwise Entanglement}
It is crucial to distinguish our model from an alternative multipartite structure composed of a collection of $N$ separate two-mode TFD states, such as $|\text{TFD}_n\rangle \propto \sum_k e^{-\gamma_k} |k\rangle_n \otimes |k\rangle_{n+1}$. Each such state is a standard two-mode squeezed vacuum, satisfying its own pair of annihilation conditions (e.g., $(a_n - \alpha_n a_{n+1}^\dagger)|\text{TFD}_n\rangle = 0$ and $(a_{n+1} - \alpha_n a_{n}^\dagger)|\text{TFD}_n\rangle = 0$). This arrangement describes a chain of pairwise entanglements without a global cyclic constraint imposed on a single state. Our no-go theorem does not apply to this decoupled case. Instead, our result demonstrates that a \textit{single}, globally entangled $N$-mode state, defined by the stricter condition of simultaneous cyclic annihilation, cannot be formed for $N>2$. This highlights that a straightforward generalization of the TFD state's defining relations to a single, cyclically symmetric multipartite state fails, reinforcing the unique structural properties of bipartite entanglement in this context.
\section{Conclusion}
In this paper, we have systematically investigated the uniqueness and structural properties of squeezed states, defined by annihilation conditions in increasing complexity from one to $N$ modes. Our work addresses a critical theoretical challenge in quantum mechanics: establishing when states represented by non-unitary exponentials (like the Thermofield Double state) can be equivalently generated by unitary and invertible transformations.

For single-mode systems, we rigorously demonstrated that the state annihilated by $(a - \alpha a^\dagger)$ is uniquely determined up to a normalization constant. This state is shown to be equivalent to the standard single-mode squeezed vacuum state generated by a unitary and invertible squeezing operator, with a clear relationship established between the state parameter $\alpha$ and the operator parameter $\xi$. The non-negativity of the average particle number necessitates the condition $|\alpha| < 1$ for a normalizable state. This result is significant as it provides a rigorous demonstration of how states, which naturally appear in a non-unitary exponential form, can be understood as products of unitary transformations acting on the vacuum.

Extending our analysis to two-mode systems, we found that the state defined by coupled annihilation conditions $(a - \alpha b^\dagger) |\psi \rangle = 0$ and $(b - \beta a^\dagger) |\psi \rangle = 0$ is also uniquely determined. This uniqueness implies that the coefficients in the Fock basis are non-zero only for equal excitation numbers in both modes ($m=n$), requiring $\alpha = \beta$. This uniquely defined state is equivalent to the well-known two-mode squeezed vacuum state generated by the unitary two-mode squeezing operator, providing a consistent framework for understanding its properties and its generation via invertible transformations. These findings for $N=1$ and $N=2$ address a crucial theoretical aspect concerning the relationship between states defined by annihilation conditions and those generated by unitary operators, directly paralleling the transformation properties seen in the Thermofield Double state for two spacetime regions.

The investigation of $N$-mode squeezing under a cyclic coupling scheme, $(a_i - \alpha_i a_{i+1}^\dagger) |\psi \rangle = 0$, was motivated by exploring a direct generalization of the pairwise annihilation conditions inherent in structures like the TFD state for two spacetime regions. Our derivation of recurrence relations for the Fock coefficients showed that for non-trivial states to exist under this specific cyclic coupling, the mode excitation numbers must satisfy a chain of inequalities, $m_1 \le m_2 \le \dots \le m_N \le m_1$, which logically implies $m_1 = m_2 = \dots = m_N = k$. This suggested that the $N$-mode squeezed state would be a superposition of states where all modes possess an identical number of excitations, i.e., $|\psi \rangle = \sum_k C_k |k, k, \dots, k \rangle$.

However, a crucial implication arises for systems with $N>2$. By substituting this proposed form of the state back into the defining master equation, we found that the equation can only be satisfied if all coefficients $C_k$ for $k>0$ are zero. This leads to the unexpected conclusion that, for $N>2$, the $N$-mode squeezed state, as defined by the cyclic operator $(a_i - \alpha_i a_{i+1}^\dagger)$, reduces trivially (i.e., if all coefficients $\alpha_i=0$) to the vacuum state. The only exception is the $N=2$ case, where the standard two-mode squeezed vacuum state, characterized by identical excitation numbers in each mode, remains a valid non-trivial solution. This highlights that the specific form of the cyclic coupling operator, which affects only two adjacent indices at a time, imposes severe restrictions on the possibility of generating entangled $N$-mode squeezed states for $N>2$ in this framework. This sharp no-go result, particularly when viewed through analogies to multi-region quantum field theories such as the Minkowski-Rindler relation, implies that a direct, straightforward generalization of such pairwise unitary vacuum transformations to $N$-distinct spacetime regions via this specific cyclic coupling may not yield non-trivial squeezed states. It underscores a fundamental structural limitation beyond pairwise squeezing and provides important guidance for future theoretical explorations of multipartite entanglement generation in multi-mode systems.
\section*{Acknowledgments}

I am grateful to Girish Agarwal, Marlan Scully, and Bill Unruh for discussions. This work was supported by the
Robert A. Welch Foundation (Grant No. A-1261) and the National Science Foundation (Grant No. PHY-2013771).
\appendix

\section{Generating Function for Double Factorial Ratio}\label{app:lemma}
\begin{lemma}
The generating function for the double factorial ratio is:
\begin{equation}
(1-x)^{-1/2} = \sum_{n=0}^{\infty} \frac{(2n-1)!!}{(2n)!!} x^n.
\end{equation}
\end{lemma}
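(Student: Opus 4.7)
The plan is to apply Newton's generalized binomial theorem with exponent $-1/2$ and then massage the resulting coefficient into the double-factorial ratio. Concretely, I would begin from
\begin{equation*}
(1-x)^{-1/2}=\sum_{n=0}^{\infty}\binom{-1/2}{n}(-x)^n,\qquad |x|<1,
\end{equation*}
where $\binom{-1/2}{n}=\frac{(-1/2)(-3/2)\cdots(-1/2-n+1)}{n!}$, and recall that this expansion is standard and convergent on $|x|<1$ for any non-integer exponent.

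Next, I would rewrite each factor in the numerator as $-(2k+1)/2$ for $k=0,1,\ldots,n-1$, pulling out $(-1)^n$ and $2^{-n}$ to obtain
\begin{equation*}
\binom{-1/2}{n}=\frac{(-1)^n\,1\cdot 3\cdot 5\cdots(2n-1)}{2^n\,n!}=\frac{(-1)^n(2n-1)!!}{2^n\,n!}.
\end{equation*}
The $(-1)^n$ here cancels the $(-1)^n$ from $(-x)^n$. I would then invoke the elementary identity $(2n)!!=2\cdot 4\cdots(2n)=2^n n!$, obtained by pulling a factor of $2$ out of each even term, to recognize $2^n n!$ as $(2n)!!$ and conclude
\begin{equation*}
(1-x)^{-1/2}=\sum_{n=0}^{\infty}\frac{(2n-1)!!}{(2n)!!}\,x^n.
\end{equation*}

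I do not expect a substantive obstacle; the only delicate points are the sign bookkeeping and fixing conventions so that the $n=0$ term gives $1$ (using $(-1)!!=1$ and $0!!=1$). As an alternative that avoids quoting the binomial series as a black box, one can observe that $f(x)=(1-x)^{-1/2}$ satisfies $2(1-x)f'(x)=f(x)$ with $f(0)=1$, insert a power-series ansatz $f(x)=\sum a_n x^n$, and equate coefficients to get the recurrence $a_{n+1}=\frac{2n+1}{2n+2}\,a_n$; telescoping from $a_0=1$ yields $a_n=(2n-1)!!/(2n)!!$, proving the identity directly. Either route settles the lemma that was used earlier to normalize the single-mode squeezed vacuum in Eq.~\eqref{eq:psi_alpha_unique}.
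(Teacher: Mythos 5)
Your main argument is correct and is essentially the paper's proof: both compute the Taylor coefficients of $(1-x)^{-1/2}$ at $x=0$ (you via the generalized binomial series, the paper via the explicit $n$th derivative $\frac{d^n}{dx^n}(1-x)^{-1/2}=\frac{(2n-1)!!}{2^n}(1-x)^{-(2n+1)/2}$), arrive at the coefficient $\frac{(2n-1)!!}{2^n n!}$, and finish with the identity $(2n)!!=2^n n!$. The ODE/recurrence alternative you sketch at the end is a valid and somewhat more self-contained variant, but your primary route coincides with the paper's.
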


\begin{proof}
Consider the Taylor series of $f(x) = (1-x)^{-1/2}$ around $x=0$. The $n^{\text{th}}$ derivative is:
\begin{equation}
\frac{d^n}{dx^n} (1-x)^{-1/2} = \frac{(2n-1)!!}{2^n} (1-x)^{-(2n+1)/2}.
\end{equation}
At $x=0$, the Taylor coefficient is $a_n = \frac{1}{n!} \frac{(2n-1)!!}{2^n}$. Since $(2n)!! = 2^n n!$, we have $a_n = \frac{(2n-1)!!}{(2n)!!}$. Thus:
\begin{equation}
(1-x)^{-1/2} = \sum_{n=0}^{\infty} \frac{(2n-1)!!}{(2n)!!} x^n.
\end{equation}
(Note: $(-1)!! = 1$.)
\end{proof}

\section{Commutator Derivation}\label{app:commutator}

To verify that $e^{\frac{1}{2} \alpha a^{\dagger 2}} \ket{0}$ satisfies $(a - \alpha a^\dagger) \ket{\psi} = 0$, we compute the commutator $[a, e^{\frac{1}{2} \alpha a^{\dagger 2}}]$. Using the operator identity $[a, f(a^\dagger)] = f'(a^\dagger)$, for $f(a^\dagger) = e^{\frac{1}{2} \alpha a^{\dagger 2}}$, we have $f'(a^\dagger) = \alpha a^\dagger e^{\frac{1}{2} \alpha a^{\dagger 2}}$. Thus:
\begin{equation}
[a, e^{\frac{1}{2} \alpha a^{\dagger 2}}] = \alpha a^\dagger e^{\frac{1}{2} \alpha a^{\dagger 2}}.
\end{equation}
Applying this, we get:
\begin{align}
(a - \alpha a^\dagger) e^{\frac{1}{2} \alpha a^{\dagger 2}} \ket{0} &= \left( e^{\frac{1}{2} \alpha a^{\dagger 2}} a + [a, e^{\frac{1}{2} \alpha a^{\dagger 2}}] - \alpha a^\dagger e^{\frac{1}{2} \alpha a^{\dagger 2}} \right) \ket{0} \\
&= e^{\frac{1}{2} \alpha a^{\dagger 2}} a \ket{0} = 0,
\end{align}
since $a \ket{0} = 0$.

\section{Off-Diagonal Coefficients in Two-Mode Squeezing}\label{app:off_diagonal}

To show that off-diagonal coefficients $C_{m,n}$ ($m \neq n$) vanish for the state $\ket{\psi}$ satisfying $(a - \alpha b^\dagger) \ket{\psi} = 0$ and $(b - \beta a^\dagger) \ket{\psi} = 0$, consider the recurrence relations:
\begin{align}
C_{m+1,n} &= \alpha \sqrt{\frac{n}{m+1}} C_{m,n-1}, \\
C_{m,n+1} &= \beta \sqrt{\frac{m}{n+1}} C_{m-1,n}.
\end{align}
For $C_{m,0}$ ($m \ge 1$), Eq.~\eqref{eq:coupled_a} gives $C_{m+1,0} \sqrt{m+1} = 0$, so $C_{m,0} = 0$. Similarly, Eq.~\eqref{eq:coupled_b} gives $C_{0,n} = 0$ for $n \ge 1$. For any $C_{m,n}$ with $m \neq n$, repeatedly apply the recurrence relations. For example, if $m > n$, use the first relation to reduce $n$ until $n-1 < 0$, yielding $C_{m-n,k} = 0$. If $n > m$, use the second relation to reduce $m$ until $m-1 < 0$, yielding $C_{k,n-m} = 0$. Thus, all off-diagonal coefficients are zero, leaving only $C_{k,k}$ non-zero.

\bibliographystyle{jhep}
\bibliography{Squeezing}
\end{document}